\newcommand{\GF}[1]{{\mathbb F}_{#1}}
\begin{document}
\title{Solving $X^{2^{2k}+2^{k}+1}+(X+1)^{2^{2k}+2^{k}+1}=b$ over $\GF{2^{4k}}$}
\author{  Kwang Ho Kim\inst{1,2} \and  Sihem Mesnager\inst{3} \and Chung Hyok Kim\inst{1}} \institute{ Institute of Mathematics,
State Academy of Sciences, Pyongyang, Democratic People's Republic
of Korea. \email{khk.cryptech@gmail.com}  \and Department of
Mathematics, University of Paris VIII, F-93526 Saint-Denis,
University Sorbonne Paris Cit\'e, LAGA, UMR 7539, CNRS, 93430
Villetaneuse and Telecom Paris, Polytechnic Institute of Paris,
91120 Palaiseau, France. \email{smesnager@univ-paris8.fr} \and
PGItech Corp., Pyongyang, Democratic People's Republic of Korea. }
\institute{Institute of Mathematics, State Academy of Sciences,
Pyongyang, Democratic People's Republic of Korea\\
\email{khk.cryptech@gmail.com} \and PGItech Corp., Pyongyang, Democratic People's Republic of Korea\\ \and Department of Mathematics, University of Paris VIII, F-93526 Saint-Denis, University Sorbonne Paris Cit\'e, LAGA, UMR 7539, CNRS, 93430 Villetaneuse and Telecom Paris, Polytechnic Institute of Paris, 91120 Palaiseau, France.\\
\email{smesnager@univ-paris8.fr}\\} \maketitle

\begin{abstract} 
 Let $F(X)=X^{2^{2k}+2^k+1}$ be the power function over the finite
field $\GF{2^{4k}}$ which is known as the Bracken-Leander function.
In \cite{BCC10,BL10,CV20,Fu22,XY17}, it was proved that the number
of solutions in $\GF{q^4}$ to the equation $F(X)+F(X+1)=b$ is in
$\{0,2,4\}$ for any $b\in \GF{q^4}$ and the number of the $b$ giving
$i$ solutions have been determined for every $i$. However, no paper provided a direct and complete method to solve such an equation, and 
  this problem remained open. This article presents a direct technique to derive
an explicit solution to that equation. The main result
 in \cite{BCC10,BL10,Fu22,XY17}, determining differential spectrum of
 $F(X)=X^{2^{2k}+2^k+1}$ over $\GF{2^{4k}}$,
 is re-derived simply from our results.
\end{abstract}

\noindent\textbf{Keywords:} Finite field  $\cdot$ Equation $\cdot$ Power function  $\cdot$ Polynomial  $\cdot$ Differential Uniformity $\cdot$ Symmetric cryptography.\\
\noindent\textbf{Mathematics Subject Classification:} 11D04, 12E05, 12E12.

\section{Introduction}

Let $F(X) = X^d$ be a monomial over $\GF{2^n}$. Define
$$N_i := \# \{b \in \GF{2^n} \mid \#\{x \in \GF{2^n} \mid F(x)+F(x+1)=b\} = i\}.$$ The \textsl{differential spectrum} of $F(X)$  over $\GF{2^n}$ is defined as $\mathbb{S}
= \{N_i\}_{0\leq i}.$ The differential spectrum is an important
concept in cryptography as it quantifies the degree of security of
$F(X)$  with respect to differential attacks \cite{BS91} when it is
used as the \textsl{Substitution box} (S-box) in the cipher.  Hence
the problem of computing the differential spectrum was of great interest
for cryptographical practices as well as from a mathematical point of
view and deserved much attention in the literature.

When $d=2^{2k}+2^{k}+1$, $F(X)=X^d$ over $\GF{2^{4k}}$ is called the
Bracken-Leander function as  it was firstly proved in \cite{BL10} by
Bracken and Leander that the maximum among its differential
spectrum called its differential uniformity, is 4 when $k$ is odd.
Dobbertin (\cite{Dobbertin98}) first introduced the power function
in 1998 to meet the conjectured nonlinearity
bound and hence is also called a highly nonlinear function.
Blondeau, Canteaut and Charpin (\cite{BCC10}) have firstly  and
completely determined the differential spectrum of the
Bracken-Leander function for the case of odd $k$. Then Xiong and Yan
\cite{XY17} followed the method of Bracken and Leander (\cite{BL10})
Furthermore, it showed that it also holds for even $k$. The result in
\cite{BCC10} for odd $k$ was obtained indirectly by depending on the
non-trivial relation between power functions and the corresponding
cyclic codes and the argument was quite involved, as noted by the
authors  of \cite{XY17}. For the even case, the differential
the spectrum of $F(X)$ was obtained directly but with great skills by
analyzing the exact number of solutions of derived equations as
noted by the author of \cite{Fu22} who provided another shorter
proof to determine the differential spectrum of $F(X)$.
However, it remained open to directly solving that equation in the
definition of the differential spectrum. This article explicitly
solves the equation $X^{2^{2k}+2^{k}+1}+(X+1)^{2^{2k}+2^{k}+1}=b$
over $\GF{2^{4k}}$ for any positive integer $k$ (Theorem~\ref{main
theorem}). The differential spectrum of the Bracken-Leander function
is also determined as a consequence of our result
(Corollary~\ref{cor}).

\section{Prerequisites}
Throughout this paper, $k$ is any positive integer and  $q=2^k$.
 Given positive integers $l$ and $L$ with $l\mid L$, define a linearized polynomial
\[\mathbf{T}^{L}_{l}(X):=X+X^{2^l}+\cdots+X^{2^{l\left(\frac{L}{l}-2\right)}}+X^{2^{l\left(\frac{L}{l}-1\right)}}.\]
When $l=1$, we will abbreviate $\mathbf{T}^{L}_{1}(\cdot)$ as
$\mathbf{T}_L(\cdot)$. For $x\in \GF{2^L}$, $\mathbf{T}_l^L(x)$ is
the trace $\mathbf{Tr}_{l}^{L}(x)$ of $x$ over $\GF{2^l}$. These
polynomials have many interesting properties and were extensively
studied in \cite{MK20}. Among them, we recall a result that will be
very frequently used throughout this article.
\begin{proposition}[{\cite[Lemma 1, Proposition 1]{MK20}}]\label{P1}
Let $l$ and $L$ be positive integers such as $l\mid L$. Then,
for any $x\in \overline{\GF{2}}$, the followings are true.
\begin{enumerate}
\item\label{Tp1}
\[
\mathbf{T}_l(x+x^{2})=x+x^{2^l}.
\]
\item\label{Tp2}
\[
\mathbf{T}_L(x)=\mathbf{T}_l\left(\mathbf{T}_l^L(x)\right)=\mathbf{T}_l^L\left(\mathbf{T}_l(x)\right).
\]
\end{enumerate}\qed
\end{proposition}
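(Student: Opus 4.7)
The plan is to verify each part by direct expansion from the explicit formula $\mathbf{T}^L_l(X) = X + X^{2^l} + X^{2^{2l}} + \cdots + X^{2^{l(L/l-1)}}$, using only the additivity of the Frobenius in characteristic $2$.

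For part~(1), I would expand
\[\mathbf{T}_l(x+x^2) = \sum_{i=0}^{l-1}(x+x^2)^{2^i} = \sum_{i=0}^{l-1}\left(x^{2^i} + x^{2^{i+1}}\right),\]
whereupon the sum telescopes to $x + x^{2^l}$, which is the desired identity.

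For part~(2), the claim is essentially the transitivity of the trace, here formulated as a polynomial identity over $\overline{\GF{2}}$. My plan is to write
\[\mathbf{T}_l\left(\mathbf{T}_l^L(x)\right) = \sum_{i=0}^{l-1}\left(\sum_{j=0}^{L/l-1} x^{2^{lj}}\right)^{2^i} = \sum_{i=0}^{l-1}\sum_{j=0}^{L/l-1} x^{2^{lj+i}},\]
and then to note that Euclidean division of $m$ by $l$ gives a bijection between $\{0,1,\ldots,L-1\}$ and the set of pairs $(j,i)$ with $0 \leq j < L/l$ and $0 \leq i < l$. Hence the double sum equals $\sum_{m=0}^{L-1} x^{2^m} = \mathbf{T}_L(x)$. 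The other equality $\mathbf{T}_l^L(\mathbf{T}_l(x)) = \mathbf{T}_L(x)$ follows from the same computation with the order of the two summations swapped.

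I do not foresee any real obstacle: both parts reduce to routine reindexing of exponents combined with the additivity of squaring. The only mild care needed is the index bookkeeping in part~(2); beyond that, no field-theoretic input (such as the trace being $\GF{2^l}$-valued on $\GF{2^L}$) is invoked, which is precisely why the identities extend from $\GF{2^L}$ to the whole algebraic closure $\overline{\GF{2}}$.
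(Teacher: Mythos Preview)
Your proof is correct: both items reduce to the telescoping and reindexing you describe, and the additivity of Frobenius in characteristic~$2$ is all that is needed. Note that the paper itself does not prove this proposition at all --- it is stated with a bare \texttt{\textbackslash qed} and attributed to \cite{MK20} --- so there is no in-paper argument to compare against; your direct expansion is the natural and standard verification.
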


Quadratic equations over finite fields were completely solved to
which we will reduce our study on the equation
$X^{2^{2k}+2^{k}+1}+(X+1)^{2^{2k}+2^{k}+1}=b$.
\begin{proposition}[{\cite{BSS1999,MP2013,MK20}}]\label{quadratic_old}
The quadratic equation
\[
x^2+x+a=0, a\in \GF{2^n}
\]
has solutions in $\GF{2^n}$ if{f}
\[
\mathbf{T}_n(a)=0.
\]
Let us assume $\mathbf{T}_n(a)=0$. Let $\delta$ be an element in
$\GF{2^n}$ such that $\mathbf{T}_n(\delta)=1$ (if $n$ is odd, then
one can take $\delta=1$). Then,
\[x_0=\sum_{i=0}^{n-2}(\sum_{j=i+1}^{n-1}\delta^{2^j})a^{2^i}\] is a
solution to the equation.\qed
\end{proposition}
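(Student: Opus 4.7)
The proposition breaks into two parts: the iff characterization of solvability, and the explicit formula for a particular solution when the trace condition holds. My plan is to prove necessity of the trace condition by a short trace argument, and then to establish sufficiency together with correctness of the formula simultaneously by directly verifying that $x_0^2+x_0=a$.

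For necessity, suppose $x_0\in\GF{2^n}$ satisfies $x_0^2+x_0=a$. The absolute trace $\mathbf{T}_n$ is $\GF{2}$-linear and, since $y^{2^n}=y$ for every $y\in\GF{2^n}$, satisfies $\mathbf{T}_n(y^2)=\mathbf{T}_n(y)$. Applying $\mathbf{T}_n$ to both sides then yields $\mathbf{T}_n(a)=\mathbf{T}_n(x_0)+\mathbf{T}_n(x_0)=0$. Incidentally, when $n$ is odd one has $\mathbf{T}_n(1)=n\bmod 2=1$, which justifies the remark that one may take $\delta=1$ in that case.

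For the formula, I would compute $x_0^2+x_0$ directly from the explicit expression. Writing $c_i:=\sum_{j=i+1}^{n-1}\delta^{2^j}$ so that $x_0=\sum_{i=0}^{n-2}c_i\,a^{2^i}$, squaring and reindexing by $m=i+1$ and $\ell=j+1$ yields
\[
x_0^2=\sum_{m=1}^{n-1}\Bigl(\sum_{\ell=m+1}^{n}\delta^{2^\ell}\Bigr)a^{2^m}.
\]
Peeling off the boundary term via $\delta^{2^n}=\delta$ rewrites the inner sum as $c_m+\delta$, so $x_0^2=\sum_{m=1}^{n-1}c_m a^{2^m}+\delta\sum_{m=1}^{n-1}a^{2^m}$. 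Since $c_{n-1}=0$ (empty sum), adding $x_0$ causes the double-sum parts to cancel in characteristic two except for the $i=0$ contribution $c_0 a$. Hence $x_0^2+x_0=c_0 a+\delta\sum_{m=1}^{n-1}a^{2^m}$. Finally $\mathbf{T}_n(\delta)=1$ gives $c_0=1+\delta$, and $\mathbf{T}_n(a)=0$ gives $\sum_{m=1}^{n-1}a^{2^m}=a$, so $x_0^2+x_0=(1+\delta)a+\delta a=a$, as required.

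The main obstacle is the bookkeeping of the inner sum after squaring: one must invoke $\delta^{2^n}=\delta$ to relate the shifted sum back to the original $c_m$, and then carefully track which boundary contributions at $i=0$ and $m=n-1$ survive once $x_0^2$ is added to $x_0$. Once this telescoping is executed, the two leftover pieces combine via the two trace hypotheses to produce exactly $a$, and no further ingredients beyond $\GF{2}$-linearity of the trace are needed.
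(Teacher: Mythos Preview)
Your proof is correct. Note, however, that the paper does not actually prove this proposition: it is quoted from the literature and closed immediately with a proof-end marker, so there is no argument in the paper to compare against. Your direct verification---squaring $x_0$, reindexing, telescoping the inner sum via $\delta^{2^n}=\delta$, and then using $\mathbf{T}_n(\delta)=1$ and $\mathbf{T}_n(a)=0$ to collapse the two surviving terms to $a$---is the standard argument and supplies exactly what the paper omits.
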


\section{Main Results}
Let $f(X):=X^{q^2+q+1}+(X+1)^{q^2+q+1}$. The following fact is very
usefully exploited afterwards.
\begin{proposition}\label{alpha}
Let $\alpha\in \GF{q}$, $\beta\in \GF{q^2}$ and $x\in \GF{q^4}$.
Then,
\[
f(x+\alpha)=f(x)+\alpha+\alpha^2.
\]
\[
f(x+\beta)=f(x)+(\beta+\beta^q)(x+x^{q^2})+\beta^2+\beta^q.
\]
\end{proposition}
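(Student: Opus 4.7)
The plan is to reduce $f(X)$ to an explicit polynomial expression that is visibly symmetric in the conjugates $X, X^q, X^{q^2}$, and then verify both identities by direct substitution and cancellation in characteristic two. First I would expand $(X+1)^{q^2+q+1}$ using $(X+1)^{q^2+q+1} = (X+1)(X^q+1)(X^{q^2}+1)$ and distribute: the leading term $X \cdot X^q \cdot X^{q^2} = X^{q^2+q+1}$ cancels against $X^{q^2+q+1}$ in $f$, so
\[
f(X) = X^{q^2+q} + X^{q^2+1} + X^{q+1} + X^{q^2} + X^q + X + 1.
\]
This is the key simplification, because now each homogeneous component is linear or quadratic in the Frobenius conjugates and behaves predictably under translation.

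Next I would compute $f(X+Y) - f(X)$ for a general $Y$. Using $(X+Y)^{q^i+q^j} = X^{q^i+q^j} + X^{q^i}Y^{q^j} + X^{q^j}Y^{q^i} + Y^{q^i+q^j}$ for $i \ne j$, and $(X+Y)^{q^i} = X^{q^i} + Y^{q^i}$, the difference becomes a sum of six cross terms $X^{q^i}Y^{q^j}$ together with the pure-$Y$ part $f(Y) - 1$ (plus $Y$).

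For the first identity I substitute $Y = \alpha \in \GF{q}$, so $\alpha^q = \alpha^{q^2} = \alpha$. Every cross term $X^{q^i}\alpha$ and $X^{q^j}\alpha$ appears an even number of times and cancels in characteristic two; the pure-$\alpha$ part collapses to $3\alpha^2 + 3\alpha = \alpha^2 + \alpha$, giving the claim.

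For the second identity I substitute $Y = \beta \in \GF{q^2}$, so $\beta^{q^2} = \beta$ while $\beta^q$ is free. The pure-$\beta$ terms simplify (using $\beta^{q^2+q} = \beta^{q+1}$, $\beta^{q^2+1} = \beta^2$, $\beta^{q^2} = \beta$) to $\beta^2 + \beta^q$. The surviving cross terms are $(X^{q^2}+X)\beta^q + (X^{q^2}+X)\beta$, which I recognize as $(\beta + \beta^q)(X + X^{q^2})$. I expect the main obstacle to be nothing more than disciplined bookkeeping of the six cross terms under the specializations $\alpha^q=\alpha$ and $\beta^{q^2}=\beta$, since the characteristic-two cancellations are what make the two identities clean; getting the $\beta$-case to collect precisely into the factored form $(\beta+\beta^q)(x+x^{q^2})$ is the only step that requires a moment's attention.
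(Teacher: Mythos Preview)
Your proof is correct and takes essentially the same approach as the paper: both derive the explicit expansion $f(X)=X^{q^2+q}+X^{q^2+1}+X^{q+1}+X^{q^2}+X^q+X+1$ and then substitute and cancel in characteristic two, using $\alpha^q=\alpha^{q^2}=\alpha$ and $\beta^{q^2}=\beta$. The only cosmetic difference is that you pass through a general $f(X+Y)-f(X)$ before specializing, while the paper handles the two cases separately; your parenthetical ``(plus $Y$)'' is a slip, since the pure-$Y$ part is exactly $f(Y)-1$, but this does not affect the subsequent computations, which are correct.
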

\begin{proof}
We can rewrite
\[
f(X)=X^{q^2+q}+X^{q^2+1}+X^{q+1}+X^{q^2}+X^q+X+1.
\]

Then,
\begin{align*}
&f(x+\alpha)=\left(x^{q^2+q}+\alpha(x^{q^2}+x^q)+\alpha^2\right)+\left(x^{q^2+1}+\alpha(x^{q^2}+x)+\alpha^2\right)+\\
&\quad\quad\left(x^{q+1}+\alpha(x^{q}+x)+\alpha^2\right)+(x^{q^2}+\alpha)
+(x^{q}+\alpha)+(x+\alpha)+1\\
 &=f(x)+\alpha+\alpha^2,\\
\end{align*}
and
\begin{align*}
&f(x+\beta)=\left(x^{q^2+q}+\beta^q x^{q^2}+\beta x^q+\beta^{q+1}\right)+\left(x^{q^2+1}+\beta(x^{q^2}+x)+\beta^2\right)+\\
&\quad\quad\left(x^{q+1}+\beta x^{q}+\beta^q
x+\beta^{q+1}\right)+(x^{q^2}+\beta)
+(x^{q}+\beta^q)+(x+\beta)+1\\
 &=f(x)+(\beta+\beta^q)(x+x^{q^2})+\beta^2+\beta^q.
\end{align*}\qed
\end{proof}
\subsection{Solving $f(X)=b$}
Let $x\in \GF{q^4}$ be a solution to $f(X)=b$, i.e.,
\begin{equation}\label{eq1}
x^{q^2+q}+x^{q^2+1}+x^{q+1}+x^{q^2}+x^q+x+1=b.
\end{equation}
By raising  the equation~\eqref{eq1} to the $q$-th power we get
\begin{equation}\label{eq2}
x^{q^3+q^2}+x^{q^3+q}+x^{q^2+q}+x^{q^3}+x^{q^2}+x^q+1=b^q.
\end{equation}
Summing the equations~\eqref{eq1} and \eqref{eq2} produces
\begin{align*}
&x^{q^3+q^2}+x^{q^3+q}+x^{q^2+1}+x^{q+1}+x^{q^3}+x=b+b^q\\
&\Longleftrightarrow(x+x^q)^{q^3+q}+(x+x^q)^{q^3}=b+b^q,
\end{align*}
or equivalently,
\begin{equation}\label{eq5}
(x+x^q)^{q^2+1}+(x+x^q)=b^q+b^{q^2}.
\end{equation} Let $e:=\mathbf{T}^{4k}_k(b)$. Then, $\eqref{eq5}+\eqref{eq5}^{q^2}$ gives
\begin{equation}\label{trcond}
\mathbf{T}^{4k}_k(x)=e.
\end{equation}
Therefore, we have $(x+x^q)^{q^2}=e+(x+x^q)$ and substituting it
into \eqref{eq5} results in
\begin{equation}\label{eq6}
(x+x^q)^2+\left(e+1\right)(x+x^q)+b^q+b^{q^2}=0.
\end{equation}
Now, the case $e=1$ is easily solved.
\begin{proposition}\label{prop1}
Assume $\mathbf{T}^{4k}_k(b)=1$. Then, the followings hold.
\begin{enumerate}
\item The equation~\eqref{eq1} has 0 or 2 solutions in $\GF{q^4}$.
\item The equation~\eqref{eq1} has  2 solutions in
$\GF{q^4}$ if{f}
$$\mathbf{T}_{2k}\left(b^{q^2+1}\right)=1.$$
\item Under the conditions $\mathbf{T}^{4k}_k(b)=1$ and
$\mathbf{T}_{2k}(b^{q^2+1})=1$, the two $\GF{q^4}-$solutions to
\eqref{eq1} are
\begin{equation}\label{sol1x}
x_{i}=b^{q/2}+\alpha_i, i\in
\{0,1\}
\end{equation}
where $\alpha_0$ and $\alpha_1$ are two
solutions in $\GF{q}$ to
\begin{equation}\label{sol1a}
\alpha^2+\alpha=f(b^{\frac{q}{2}})+b.
\end{equation}
\end{enumerate}
\end{proposition}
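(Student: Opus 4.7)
The plan is to reduce the equation $f(x)=b$, under the hypothesis $\mathbf{T}^{4k}_k(b)=1$, to a single quadratic over $\GF{q}$ in an auxiliary unknown $\alpha$, and then apply Proposition~\ref{quadratic_old}. Substituting $e=1$ into~\eqref{eq6} collapses it to $(x+x^q)^2=b^q+b^{q^2}$, whose unique square root in characteristic~$2$ is $x+x^q=b^{q/2}+b^{q^2/2}=b^{q/2}+(b^{q/2})^q$. Hence $(x+b^{q/2})+(x+b^{q/2})^q=0$, so $\alpha:=x+b^{q/2}\in\GF{q}$ and every candidate solution has the form $x=b^{q/2}+\alpha$ claimed in~\eqref{sol1x}.

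Substituting this parametrization back into $f(x)=b$ via the first identity of Proposition~\ref{alpha} gives $f(b^{q/2}+\alpha)=f(b^{q/2})+\alpha^2+\alpha$, so $x=b^{q/2}+\alpha$ solves~\eqref{eq1} if and only if $\alpha$ satisfies the quadratic $\alpha^2+\alpha=f(b^{q/2})+b$, i.e.\ equation~\eqref{sol1a}. Such a quadratic admits exactly $0$ or $2$ roots in $\GF{q}$ (differing by $1$), so items~1 and~3 follow at once.

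It remains to prove item~2. Set $c:=f(b^{q/2})+b$. I would first check that $c\in\GF{q}$ automatically: since $f\in\GF{2}[X]$ commutes with the Frobenius, $c+c^q=f(b^{q/2})+f(b^{q/2})^q+b+b^q$ expands in $v:=b^{1/2}$ and, after grouping $v^{q^3+q}+v^{q^3+1}=v^{q^3}(v+v^q)$ and $v^{q^2+q}+v^{q^2+1}=v^{q^2}(v+v^q)$, factors as $(v+v^q)\bigl(\mathbf{T}^{4k}_k(v)+1\bigr)$; the second factor vanishes because $\mathbf{T}^{4k}_k(v)=(\mathbf{T}^{4k}_k(b))^{1/2}=1$. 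With $c\in\GF{q}$, Proposition~\ref{quadratic_old} reduces the existence of two roots to $\mathbf{T}_k(c)=0$. To finish, I would expand the seven monomial contributions to $\mathbf{T}_k(f(b^{q/2})+b)$, apply the transitivity formula of Proposition~\ref{P1} in the form $\mathbf{T}_{2k}(y)=\mathbf{T}_k(y+y^q)$ to collapse pairs of half-power terms into traces of integer-power terms, and use $e=1$ once more to identify the result with $\mathbf{T}_{2k}(b^{q^2+1})+1$, yielding the claimed equivalence $\mathbf{T}_k(c)=0 \iff \mathbf{T}_{2k}(b^{q^2+1})=1$. The main obstacle is precisely this final trace identification: bookkeeping the half-power monomials of $b$, choosing the right pairings, and tracking the $\mathbf{T}_k(1)$ contribution from $f(0)=1$ are delicate, and the $+1$ shift that produces the cleaner condition $\mathbf{T}_{2k}(b^{q^2+1})=1$ hinges on these details.
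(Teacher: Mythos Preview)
Your approach is essentially the paper's: collapse~\eqref{eq6} at $e=1$ to force $x=b^{q/2}+\alpha$ with $\alpha\in\GF{q}$, substitute via Proposition~\ref{alpha} to get~\eqref{sol1a}, and finish by computing the absolute trace of the right-hand side. The one practical difference is in the final bookkeeping you flag as delicate: rather than work with half-powers $v=b^{1/2}$, the paper squares first and computes $\mathbf{T}_k\bigl(f(b^{q})+b^{2}\bigr)$, which equals $\mathbf{T}_k\bigl(f(b^{q/2})+b\bigr)$ since the argument lies in $\GF{q}$; with integer exponents one can use $b^{q^3}+b^{q}=b^{q^2}+b+1$ directly to obtain the closed form $\mathbf{T}_k^{2k}(b^{q^2+1})+\mathbf{T}_{2k}^{4k}(b+b^2)$, after which $\mathbf{T}_k$ and Proposition~\ref{P1} give $\mathbf{T}_{2k}(b^{q^2+1})+1$ with no pairing ambiguity.
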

\begin{proof}
 If $\mathbf{T}^{4k}_k(b)=1$, then the equation~\eqref{eq6} transforms into
 $x+x^q=(b+b^q)^{q/2}$ which indicates
\[
x=b^{q/2}+\alpha \text{ for some } \alpha\in \mathbb{F}_q.
\]
Thanks to Proposition~\ref{alpha}, substituting $x=b^{q/2}+\alpha$
to \eqref{eq1} yields
 $\alpha^2+\alpha=\left(f(b^{q})+b^2\right)^{\frac{1}{2}}$. To complete the proof, it remains only to confirm that
\begin{align*}
&f(b^{q})+b^2\overset{\mathbf{T}^{4k}_k(b)=1}{=}b^{q^3+q^2}+b^{q^3+q}+b^{q^2+q}
+b^2+b\\
&=b^{q^2}(b^{q^3}+b^q)+b^{q^3+q}+b^2+b\\
&\overset{\mathbf{T}^{4k}_k(b)=1}{=} b^{q^2}(b^{q^2}+b+1)+b^{q^3+q}+b^2+b\\
&=\left(b^{q^2+1}+b^{q(q^2+1)}\right)+(b+b^{q^2})+(b+b^{q^2})^2\\
&=\mathbf{T}_k^{2k}\left(b^{q^2+1}\right)+\mathbf{T}_{2k}^{4k}(b+b^2),
\end{align*}
 and subsequently
\begin{align*}
&\mathbf{T}_k\left(f(b^{q})+b^2\right)\overset{\text{Prop.
\ref{P1}}}{=}\mathbf{T}_{2k}\left(b^{q^2+1}\right)+\mathbf{T}_{2k}^{4k}(b+b^q)\\
&\overset{\text{Prop.
\ref{P1}}}{=}\mathbf{T}_{2k}\left(b^{q^2+1}\right)+\mathbf{T}_{4k}(b)=\mathbf{T}_{2k}\left(b^{q^2+1}\right)+1,
\end{align*}
then we use Proposition~\ref{quadratic_old}. \qed
\end{proof}

Then, we should consider the case $e\neq 1.$ In this case, from the
equation~\eqref{eq6} we derive
$$x+x^q=(e+1)\left(\mathbf{T}_k\left(\frac{b^q}{e^2+1}\right)+\mathbb{F}_2\right),$$
which can be directly checked as
$\mathbf{T}_k\left(\frac{b^q}{e^2+1}\right)+\mathbf{T}_k\left(\frac{b^q}{e^2+1}\right)^2\overset{\text{Prop.
\ref{P1}}}{=}\frac{b^q}{e^2+1}+\left(\frac{b^q}{e^2+1}\right)^q=\frac{b^q+b^{q^2}}{e^2+1}.$
Notice via Proposition~\ref{quadratic_old} there exists an element
$c\in \mathbb{F}_{q^4}$ satisfying
$$c^2+(e+1)c=b^q,$$
or equivalently,
$$\left(\frac{c}{e+1}\right)^2+\frac{c}{e+1}=\frac{b^q}{e^2+1}$$
 since
$\mathbf{T}_{4k}\left(\frac{b^q}{e^2+1}\right)=\mathbf{T}_k\left(\frac{e}{e^2+1}\right)=\mathbf{T}_k\left(\frac{e}{e+1}+\left(\frac{e}{e+1}\right)^2\right)\overset{\text{Prop.
\ref{P1}}}{=}\frac{e}{e+1}+\frac{e}{e+1}=0$. Therefore, we have
\begin{align*}
x+x^q&=(e+1)\left(\mathbf{T}_k\left(\frac{c}{e+1}+\left(\frac{c}{e+1}\right)^2\right)+\mathbb{F}_2\right)
\\&\overset{\text{Prop. \ref{P1}}}{=}(e+1)\left(\frac{c}{e+1}+\left(\frac{c}{e+1}\right)^q+\mathbb{F}_2\right)
\\&=c+c^q+(e+1)\cdot\mathbb{F}_2,
\end{align*}
from which it follows
\begin{equation}\label{sol_eneq1}
x=c+\omega\cdot(e+1)\cdot\mathbb{F}_2+\alpha \text{ for some }
\alpha\in \mathbb{F}_q
\end{equation}
where $\omega$ is an element in $\mathbb{F}_{q^2}$ such that
$\omega+\omega^q=1$. Again, substituting $x=c+\alpha$ and
$x=c+\omega\cdot(e+1)+\alpha$ to the equation \eqref{eq1}, with
thanks to Proposition~\ref{alpha}, yields
\begin{equation}\label{eq3}
\alpha^2+\alpha=f(c)+b
\end{equation}
\begin{equation}\label{eq4}
\alpha^2+\alpha=f\left(c+\omega(e+1)\right)+b.
\end{equation}
Consequently, via Proposition~\ref{quadratic_old} we have
\begin{proposition}\label{prop2} Assume $b\in \GF{q^4}$ satisfies
$e:=\mathbf{T}^{4k}_k(b)\neq1$. Let $\omega$ be an element in
$\mathbb{F}_{q^2}$ such that $\omega+\omega^q=1$, and $c\in
\GF{q^4}$ be an element such that $c^2+(e+1)c=b^q.$ Then, the
followings hold.
\begin{enumerate}
\item $f(X)=b$ has either 0, 2 or 4 solutions in
$\GF{q^4}$.
\item $f(X)=b$ has no solution if{f} $$\mathbf{T}_k(f(c)+b)\neq0 \text{ and } \mathbf{T}_k\left(f\left(c+\omega(e+1)\right)+b\right)\neq 0.$$
\item $f(X)=b$ has 2 solutions in
$\GF{q^4}$ if{f} either of the following conditions is satisfied :
\begin{enumerate}
\item $\mathbf{T}_k(f(c)+b)=0 \text{ and } \mathbf{T}_k\left(f\left(c+\omega(e+1)\right)+b\right)\neq 0;$
\item $\mathbf{T}_k(f(c)+b)\neq0 \text{ and } \mathbf{T}_k\left(f\left(c+\omega(e+1)\right)+b\right)= 0.$
\end{enumerate}
\item $f(X)=b$ has 4 solutions if{f}
$$\mathbf{T}_k(f(c)+b)=0 \text{ and } \mathbf{T}_k\left(f\left(c+\omega(e+1)\right)+b\right)= 0.$$
\item The explicit expressions
of the solutions in $\GF{q^4}$ to $f(X)=b$ are given by
\eqref{sol_eneq1}, \eqref{eq3} and \eqref{eq4} with application of
Proposition~\ref{quadratic_old}.
\end{enumerate}\qed
\end{proposition}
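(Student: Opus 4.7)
My plan is to finish the reduction already performed in the paragraphs preceding the statement. Under the hypothesis $e\neq 1$, that analysis pins every $\GF{q^4}$-solution of $f(X)=b$ to one of the two affine $\GF{q}$-lines $c+\GF{q}$ or $c+\omega(e+1)+\GF{q}$, and these two lines are disjoint because $\omega\in\GF{q^2}\setminus\GF{q}$ while $e+1\in\GF{q}^\star$. Writing $x=c+\alpha$ or $x=c+\omega(e+1)+\alpha$ with $\alpha\in\GF{q}$ and applying the first identity of Proposition~\ref{alpha}, namely $f(x+\alpha)=f(x)+\alpha+\alpha^2$ whenever $\alpha\in\GF{q}$, turns each substitution into the quadratic \eqref{eq3} or \eqref{eq4} in the unknown $\alpha\in\GF{q}$.

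The rest of the argument is then a single application of Proposition~\ref{quadratic_old} (with $n=k$) to each of these two quadratics. Each equation $\alpha^2+\alpha=r$ contributes $0$ or $2$ solutions in $\GF{q}$ according to whether $\mathbf{T}_k(r)\neq 0$ or $=0$; since the two candidate lines are disjoint, the two contributions add rather than collide. Enumerating the four possibilities $(0,0),(0,2),(2,0),(2,2)$ for the pair of trace conditions then yields items~(2)--(4) of the proposition, item~(1) is their union, and item~(5) follows by combining \eqref{sol_eneq1} with the explicit quadratic-solution formula of Proposition~\ref{quadratic_old} applied to \eqref{eq3} and \eqref{eq4}.

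The main obstacle I foresee is the implicit check that the right-hand sides $f(c)+b$ and $f(c+\omega(e+1))+b$ actually lie in $\GF{q}$, so that Proposition~\ref{quadratic_old} may legitimately be applied over $\GF{q}$ rather than over $\GF{q^4}$ and the stated $\mathbf{T}_k$-conditions are true $\GF{q}$-conditions. To settle this, I would raise $c^2+(e+1)c=b^q$ to its Frobenius powers to obtain $c^{2q^i}+(e+1)c^{q^i}=b^{q^{i+1}}$ for $i=0,1,2,3$, expand $f(c)+(f(c))^q$ through these relations in terms of $c,c^q,c^{q^2},c^{q^3}$ and the $b^{q^j}$, and then collapse the result to $b+b^q$ using $\mathbf{T}^{4k}_k(b)=e$ together with $(e+1)^{q-1}=1$, which holds because $e+1\in\GF{q}^\star$.
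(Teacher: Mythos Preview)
Your main argument is correct and is exactly the paper's approach: the derivation of \eqref{sol_eneq1}, \eqref{eq3}, \eqref{eq4} preceding the statement, together with Proposition~\ref{quadratic_old}, already constitutes the full proof (the paper's own proof is just the symbol \qed).

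Your final paragraph, however, is misdirected. You do \emph{not} need $f(c)+b\in\GF{q}$ for the proposition to hold, and in fact it is not always true: Proposition~\ref{prop3} later shows that $f(c)+b\in\GF{q}$ forces $\mathbf{T}^{4k}_k(c)=e$, while the alternative $\mathbf{T}^{4k}_k(c)=1$ genuinely occurs, so your proposed verification that $f(c)+f(c)^q=b+b^q$ would fail in general. The correct observation is the identity $\mathbf{T}_k(r)^2+\mathbf{T}_k(r)=r+r^q$, which shows that $\mathbf{T}_k(r)=0$ already forces $r\in\GF{q}$. Hence for any $r\in\GF{q^4}$ the equation $\alpha^2+\alpha=r$ has a solution $\alpha\in\GF{q}$ if and only if $\mathbf{T}_k(r)=0$: when $r\notin\GF{q}$ the left side lies in $\GF{q}$ so there is no solution, and simultaneously $\mathbf{T}_k(r)\neq 0$. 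This is precisely the equivalence $\mathbf{T}_k(y)\in\GF{2}\Longleftrightarrow y\in\GF{q}$ that the paper invokes later in the proof of Theorem~\ref{main theorem}.
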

%depending on solvability of each of the two quadratic equations \eqref{eq3} and \eqref{eq4}
\subsection{Restating the conditions on the solution number}
Though the main subject of this article to explicitly solve the
equation $f(X)=b$, is resolved by Propositions~\ref{prop1}
and~\ref{prop2} with Proposition~\ref{quadratic_old},  yet it is not
clear how these results lead to a new way of determining the
differential spectrum of $X^{q^2+q+1}$. This subsection contributes
to resolve the problem.

\begin{proposition}\label{prop3} Let $b,c\in \GF{q^4}$satisfy $$c^2+(e+1)c=b^q$$ where $e:=\mathbf{T}_k^{4k}(b)$. Then, the
followings hold.
\begin{enumerate}
\item $\mathbf{T}_k^{4k}(c)\in \{1,e\}$.
\item If either $f(c)+b$ or $f\left(c+\omega(e+1)\right)+b$ lies in
$\GF{q}$, then
$$\mathbf{T}_k^{4k}(c)=e.$$
\item If $\mathbf{T}_k^{4k}(c)=e$, then
\[\mathbf{T}_k(f(c)+b)=\mathbf{T}_k\left((1+c+c^{q^2})^{q+1}\right)+\mathbf{T}_{2k}\left(c^{q^2+1}\right)\] and
\[
\mathbf{T}_k\left(f(c+\omega(e+1))+b\right)=\mathbf{T}_{2k}\left(c^{q^2+1}\right)+1.
\]
\end{enumerate}
\end{proposition}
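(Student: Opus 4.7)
The plan is to prove the three items in order.

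For item~(1), apply $\mathbf{T}_k^{4k}$ to $c^2+(e+1)c=b^q$. Since $\mathbf{T}_k^{4k}$ is additive, commutes with squaring in characteristic~$2$, and is $\GF{q}$-linear with $\mathbf{T}_k^{4k}(b^q)=\mathbf{T}_k^{4k}(b)=e$, writing $t:=\mathbf{T}_k^{4k}(c)$ I obtain $t^2+(e+1)t+e=0$, which factors as $(t+1)(t+e)=0$.

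For item~(2), I argue by contrapositive: assuming $t=1$ (the proposition is used in the setting $e\neq 1$), I show that neither $f(c)+b$ nor $f(c+\omega(e+1))+b$ is Frobenius-fixed. The key is to derive the two factorisations
\[
\bigl(f(c)+b\bigr)^q+\bigl(f(c)+b\bigr)=(c+c^{q^3})(t+e),
\]
\[
\bigl(f(c+\omega(e+1))+b\bigr)^q+\bigl(f(c+\omega(e+1))+b\bigr)=(t+e)(c+c^{q^3}+e+1),
\]
by expanding $f(c)+f(c^q)$ monomial by monomial, substituting $b+b^q=c^2+c^{2q^3}+(e+1)(c+c^{q^3})$ (the $q^3$-conjugate of the defining relation), and, for the second, invoking Proposition~\ref{alpha} with $\beta=\omega(e+1)$ together with $\omega+\omega^q=1$ and $\omega^{q^2}=\omega$. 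With $t+e=1+e\neq 0$, the first identity vanishes only if $c=c^{q^3}$, which forces $c\in\GF{q}$ and hence $t=0$, contradicting $t=1$. The second vanishes only if $c+c^{q^3}=e+1$; plugging into $u^2+(e+1)u=b+b^q$ (with $u:=c+c^{q^3}$) forces $b\in\GF{q}$ and $e=0$, whereupon $c^2+c=b\in\GF{q}$ places $c$ in $\GF{q^2}$ and yields $t=2(c+c^q)=0$, again contradicting $t=1$.

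For item~(3), the first identity follows by direct expansion. One verifies
\[
f(c)+(1+c+c^{q^2})^{q+1}=c^{q^2+1}+c^{q^3}(1+c+c^{q^2}),
\]
applies $\mathbf{T}_k$, and uses Proposition~\ref{P1}(\ref{Tp2}) in the form $\mathbf{T}_{2k}(c^{q^2+1})=\mathbf{T}_k(c^{q^2+1}+c^{q^3+q})$ to reduce the claim to $\mathbf{T}_k(b+c^{q^3}(1+c+c^q+c^{q^2}))=0$. Under $t=e$, $1+c+c^q+c^{q^2}=1+e+c^{q^3}$, so the argument becomes $b+(e+1)c^{q^3}+c^{2q^3}$, which vanishes identically by the $q^3$-conjugate of the defining relation. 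For the second identity, Proposition~\ref{alpha} with $\beta=\omega(e+1)$ gives $f(c+\omega(e+1))=f(c)+(e+1)s+\omega^2(e+1)^2+\omega^q(e+1)$ with $s:=c+c^{q^2}$; combining with the first identity reduces the task to
\[
\mathbf{T}_k\bigl((1+s)^{q+1}+(e+1)s+\omega^2(e+1)^2+\omega^q(e+1)\bigr)=1.
\]
Under $t=e$, $s+s^q=e$ and $(1+s)^{q+1}=1+e+s^{q+1}=1+e+s^2+es$, so the first two terms simplify to $1+e+s^2+s$, and Proposition~\ref{P1}(\ref{Tp1}) gives $\mathbf{T}_k(s+s^2)=s+s^q=e$. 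For the last two terms, set $\nu:=\omega(e+1)$; the identity $\mathbf{T}_k(\nu^2)=\mathbf{T}_k(\nu)+\nu+\nu^q$ combined with $\mathbf{T}_k(\nu)+\mathbf{T}_k(\nu)^q=\mathbf{T}_{2k}(\nu)=\mathbf{T}_k(\nu+\nu^q)=\mathbf{T}_k(e+1)$ (by Proposition~\ref{P1}(\ref{Tp2})) causes everything to collapse to $1$.

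The main obstacle is the bookkeeping in item~(2): compressing the sum of many monomials in $c,c^q,c^{q^2},c^{q^3}$ appearing in $f(c)+f(c^q)+b+b^q$ into the clean product $(c+c^{q^3})(t+e)$ is the critical algebraic step, and the analogous factorisation for $f(c+\omega(e+1))+b$ requires careful handling of the $\omega$-terms. A secondary difficulty is the sub-case $c+c^{q^3}=e+1$, which is ruled out only by invoking the auxiliary quadratic relation $u^2+(e+1)u=b+b^q$.
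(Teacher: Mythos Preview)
Your argument is correct. The handling of item~(1) is identical to the paper's. For items~(2) and~(3) you take a somewhat different computational path.

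For item~(2), the paper proceeds directly rather than by contrapositive: assuming $f(c)+b\in\GF{q}$, it rewrites $f(c)+f(c)^q=b+b^q$ as $(c+c^q)^{q^3+q}+(c+c^q)^{q^3}=b+b^q$ (exactly the manipulation that produced equation~\eqref{eq5} earlier), raises to the $q^2$-th power, and adds to obtain $\mathbf{T}_k^{4k}(c)=e$ in one stroke; the $\omega$-shifted case is handled the same way after applying Proposition~\ref{alpha}. Your route---the explicit factorisations $(f(c)+b)^q+(f(c)+b)=(c+c^{q^3})(t+e)$ and $(t+e)(c+c^{q^3}+e+1)$ followed by case analysis---is longer but yields more: those products make transparent \emph{why} the obstruction vanishes exactly when $t=e$, whereas the paper's argument leans on repeating the earlier derivation. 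One cosmetic point: your parenthetical ``the proposition is used in the setting $e\neq 1$'' is slightly misleading; in fact the contrapositive hypothesis $t\neq e$ together with item~(1) already forces $e\neq 1$, so no external assumption is needed.

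For item~(3), the paper first substitutes $b=c^{q^3+q^2}+c^{q^3+q}+c^{q^3+1}+c^{q^3}$ (valid under $t=e$) to obtain the compact form $f(c)+b=\mathbf{T}^{4k}_k(c+c^{q+1})+\mathbf{T}^{2k}_k(c^{q^2+1})+1$, and then applies $\mathbf{T}_k$ using Proposition~\ref{P1}. You instead start from the polynomial identity $f(c)+(1+c+c^{q^2})^{q+1}=c^{q^2+1}+c^{q^3}(1+c+c^{q^2})$ and only invoke the relation for $b$ at the very end. Both routes are of comparable length; yours is perhaps more target-directed since $(1+c+c^{q^2})^{q+1}$ appears from the outset. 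The treatment of the second identity is essentially parallel in both proofs: apply Proposition~\ref{alpha}, reduce to a trace computation involving $s=c+c^{q^2}$ and $\nu=\omega(e+1)$, and collapse using $s+s^q=e$ and $\nu+\nu^q=e+1$.
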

\begin{proof}
Applying the operator $\mathbf{T}_k^{4k}$ to both sides of
$c^2+(e+1)c=b^q$ gives
$\mathbf{T}_k^{4k}(c)^2+(\mathbf{T}_k^{4k}(b)+1)\mathbf{T}_k^{4k}(c)=\mathbf{T}_k^{4k}(b)$,
i.e.,
$\left(\mathbf{T}_k^{4k}(c)+1\right)\cdot\left(\mathbf{T}_k^{4k}(c)+\mathbf{T}_k^{4k}(b)\right)=0,$
which proves the first item.

Now, $f(c)+f(c)^q=b+b^q$ is rewritten as
\[c^{q+1}+c^{q^3+q^2}+c^{q^2+1}+c^{q^3+q}+c+c^{q^3}=b+b^q,\] or
equivalently, as
\[(c+c^q)^{q^3+q}+(c+c^q)^{q^3}=b+b^q.\]
 By raising this equality to the $q^2-$th power, we get
 \[(c+c^q)^{q^3+q}+(c+c^q)^{q}=b^{q^2}+b^{q^3}\] and then adding it to the
 original
 equality yields $\mathbf{T}_k^{4k}(c)=e$. Further, then we have $b^q=c^2+(e+1)c=c^{q^3+1}+c^{q^2+1}+c^{q+1}+c$
 from which it follows
$$b=c^{q^3+q^2}+c^{q^3+q}+c^{q^3+1}+c^{q^3}$$
and
\begin{align*}
&f(c)+b=c^{q^2+q}+c^{q^2+1}+c^{q+1}+c^{q^3+q^2}+c^{q^3+q}+c^{q^3+1}+e+1\\
&=\mathbf{T}^{4k}_k(c+c^{q+1})+\mathbf{T}^{2k}_k(c^{q^2+1})+1.
\end{align*}
Therefore, by making successive uses of Item~\ref{Tp2} of
Proposition~\ref{P1} we get
\begin{align*}
&\mathbf{T}_k(f(c)+b)=\mathbf{T}_{4k}\left(c+c^{q+1}\right)+\mathbf{T}_{2k}\left(c^{q^2+1}\right)+\mathbf{T}_k(1)\\
&=\mathbf{T}_{k}\left((c^{q^2}+c)^q+(c^{q^2}+c)+(c^{q^2}+c)^{q+1}+1\right)+\mathbf{T}_{2k}\left(c^{q^2+1}\right)\\
&=\mathbf{T}_k\left((1+c+c^{q^2})^{q+1}\right)+\mathbf{T}_{2k}\left(c^{q^2+1}\right)
\end{align*}

On the other hand, by exploiting the second equality of
Proposition~\ref{alpha} we can get
$$f\left(c+\omega(e+1)\right)=f(c)+(e+1)(c+c^{q^2})+(e+1)^2\omega^2+(e+1)\omega^q,$$
and then
$f\left(c+\omega(e+1)\right)+f\left(c+\omega(e+1)\right)^q=b+b^q$ is
rewritten as
\begin{align*}
&f(c)+(e+1)(c+c^{q^2})+(e+1)^2\omega^2+(e+1)\omega^q\\
&+\left(f(c)+(e+1)(c+c^{q^2})+(e+1)^2\omega^2+(e+1)\omega^q\right)^q=b+b^q,
\end{align*}
or equivalently, as
\[(c+c^q)^{q^3+q}+(c+c^q)^{q^3}+(e+1)(\mathbf{T}^{4k}_k(c)+e)=b+b^q.\]
By raising this equality to the $q^2-$th power, we get
 \[(c+c^q)^{q^3+q}+(c+c^q)^{q}+(e+1)(\mathbf{T}^{4k}_k(c)+e)=b^{q^2}+b^{q^3}\] and then adding it to the
 original
 equality gives $\mathbf{T}_k^{4k}(c)=e$.
 Further, then
 \begin{align*}
&\mathbf{T}_k\left(f(c+\omega(e+1))+b\right)+\mathbf{T}_k\left(f(c)+b\right)\\
&=\mathbf{T}_k\left((e+1)(c+c^{q^2})+(e+1)^2\omega^2+(e+1)\omega^q\right)\\
&=\mathbf{T}_k\left(c^{q^2}+c+(c^{q^2}+c)^2+(c^{q^2}+c)^{q+1}+(e+1)^2\omega^2+(e+1)\omega+e+1\right)\\
&\overset{\text{Prop.
\ref{P1}}}{=}(c^{q^2}+c)+(c^{q^2}+c)^q+\mathbf{T}_{k}((c^{q^2}+c)^{q+1})+e+1+\mathbf{T}_k(e+1)\\
 &=\mathbf{T}_{k}\left((c^{q^2}+c)^{q+1}+e+1\right)+1\\
 &=\mathbf{T}_k\left((1+c+c^{q^2})^{q+1}\right)+1.
\end{align*}
\qed
\end{proof}

With Proposition~\ref{prop1}, Proposition~\ref{prop2} and
Proposition~\ref{prop3}, we can state
\begin{theorem}\label{main theorem}
Let $b$ be any element in $\GF{q^4}$ and $c\in \GF{q^4}$ a solution
to
$$c^2+(\mathbf{T}_k^{4k}(b)+1)c=b^q.$$ Then, for $f(X)=X^{q^2+q+1}+(X+1)^{q^2+q+1}$, the following are true.
\begin{enumerate}
\item $f(X)=b$ has either 0, 2 or 4 solutions in $\GF{q^4}$.
\item $f(X)=b$ has no solution in $\GF{q^4}$ if{f} the following hold
true :
\begin{enumerate}
\item $\mathbf{T}_k^{4k}(b)=1$ and
$\mathbf{T}_{2k}\left(b^{q^2+1}\right)=0$, or,
\item $\mathbf{T}_k^{4k}(c)=1,$ or,
\item $\mathbf{T}_k^{4k}(c)\neq1  \text{ and } \mathbf{T}_k\left((1+c+c^{q^2})^{q+1}\right)=1 \text{ and }\mathbf{T}_{2k}\left(c^{q^2+1}\right)=0.$
\end{enumerate}
\item $f(X)=b$ has 2 solutions in $\GF{q^4}$ if{f} the following hold
true :
\begin{enumerate}
\item $\mathbf{T}_k^{4k}(b)=1$ and
$\mathbf{T}_{2k}\left(b^{q^2+1}\right)=1$, or,
\item $\mathbf{T}_k^{4k}(c)\neq 1$ and $\mathbf{T}_k\left((1+c+c^{q^2})^{q+1}\right)=0.$
\end{enumerate}
\item $f(X)=b$ has 4 solutions in $\GF{q^4}$ if{f} $$\mathbf{T}_k^{4k}(c)\neq1  \text{ and } \mathbf{T}_k\left((1+c+c^{q^2})^{q+1}\right)=\mathbf{T}_{2k}\left(c^{q^2+1}\right)=1.$$
\item  The explicit expressions
of the solutions in $\GF{q^4}$ to $f(X)=b$ are given by
\eqref{sol1x}, \eqref{sol1a}, \eqref{sol_eneq1}, \eqref{eq3} and
\eqref{eq4} with application of Proposition~\ref{quadratic_old}.
\end{enumerate}
\end{theorem}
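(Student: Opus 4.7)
The plan is to compile Propositions~\ref{prop1},~\ref{prop2} and~\ref{prop3} into a single case analysis, splitting first on $e:=\mathbf{T}_k^{4k}(b)$. When $e=1$, Proposition~\ref{prop1} immediately produces items~2(a) and~3(a), together with the explicit formulas \eqref{sol1x}, \eqref{sol1a}. From this point on I focus on the case $e\neq 1$: there Proposition~\ref{prop2} says the solution count is governed by whether $\mathbf{T}_k(f(c)+b)$ and $\mathbf{T}_k\bigl(f(c+\omega(e+1))+b\bigr)$ vanish, and the job is only to repackage those two traces into the conditions in the statement.

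The repackaging is done by Proposition~\ref{prop3}. Item~1 gives the dichotomy $\mathbf{T}_k^{4k}(c)\in\{1,e\}$, and I split on it. In the branch $\mathbf{T}_k^{4k}(c)=1$, which under $e\neq 1$ means $\mathbf{T}_k^{4k}(c)\neq e$, the contrapositive of item~2 of Proposition~\ref{prop3} shows that neither $f(c)+b$ nor $f(c+\omega(e+1))+b$ lies in $\GF{q}$. Since $\alpha^2+\alpha\in\GF{q}$ for every $\alpha\in\GF{q}$, the quadratics \eqref{eq3}, \eqref{eq4} then admit no solution $\alpha\in\GF{q}$, and consequently $f(X)=b$ has no solution in $\GF{q^4}$; this yields item~2(b).

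In the remaining branch $\mathbf{T}_k^{4k}(c)=e$, item~3 of Proposition~\ref{prop3} supplies the clean identities
\[
\mathbf{T}_k(f(c)+b)=A+B,\qquad \mathbf{T}_k\bigl(f(c+\omega(e+1))+b\bigr)=B+1,
\]
with $A:=\mathbf{T}_k\bigl((1+c+c^{q^2})^{q+1}\bigr)$ and $B:=\mathbf{T}_{2k}(c^{q^2+1})$. A four-way Boolean enumeration over $(A,B)\in\F^2$, fed into Proposition~\ref{prop2}, then matches the remaining items: $A=B=1$ makes both traces vanish and gives four solutions (item~4); $A=0$, with either value of $B$, makes exactly one trace vanish and gives two solutions (item~3(b)); and $A=1,\,B=0$ leaves both traces nonzero, giving zero solutions (item~2(c)). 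The explicit expressions in item~5 then follow verbatim from \eqref{sol_eneq1}, \eqref{eq3}, \eqref{eq4} together with Proposition~\ref{quadratic_old}.

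The argument carries no new computation beyond the three supporting propositions; the real care lies in the bookkeeping of the nested split. The subtle hinge — and essentially the only obstacle — is that the condition $\mathbf{T}_k(\cdot)=0$ in Proposition~\ref{prop2} is only meaningful as a solvability criterion over $\GF{q}$ when its argument already lies in $\GF{q}$, and item~2 of Proposition~\ref{prop3} is exactly the mechanism guaranteeing that $f(c)+b$ and $f(c+\omega(e+1))+b$ belong to $\GF{q}$ precisely when $\mathbf{T}_k^{4k}(c)=e$. Without this observation the two traces of Proposition~\ref{prop2} could not be collapsed onto the two binary invariants $A,B$, and the trichotomy $0,2,4$ of the theorem would not be read off directly.
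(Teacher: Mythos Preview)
Your proof is correct and follows essentially the same route as the paper: split on $e=\mathbf{T}_k^{4k}(b)$, invoke Proposition~\ref{prop1} when $e=1$, and otherwise combine Proposition~\ref{prop2} with the dichotomy and trace identities of Proposition~\ref{prop3}, reading off the cases $(A,B)\in\F^2$. Your write-up is in fact a bit more explicit than the paper's (you spell out the Boolean enumeration and the reason the quadratics \eqref{eq3}--\eqref{eq4} are unsolvable over $\GF{q}$ when $\mathbf{T}_k^{4k}(c)=1$), but the underlying argument is identical.
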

\begin{proof} The first and fifth items were already known from
Proposition~\ref{prop1} and Proposition~\ref{prop2}.

From the first two items of Proposition~\ref{prop3}, it follows : if
$\mathbf{T}_k^{4k}(c)=1$, then neither $f(c)+b$ nor
$f\left(c+\omega(e+1)\right)+b$ lies in $\GF{q}$, and thus, by the
equivalence $\mathbf{T}_k(y)\in \GF{2}\Longleftrightarrow y\in
\GF{q}$ and Proposition~\ref{prop2}, $f(X)=b$ has no solution in
$\GF{q^4}$. Now, $\mathbf{T}_k^{4k}(c)\neq1$ which also implies
$\mathbf{T}_k^{4k}(b)\neq1$ is equivalent to
$\mathbf{T}_k^{4k}(c)=e$ via the first item of
Proposition~\ref{prop3}, and the third item of
Proposition~\ref{prop3} applied to Proposition~\ref{prop2} proves
the remaining items, combined with Proposition~\ref{prop1}.\qed
\end{proof}

To compute the differential spectrum, we further need the following
two facts.
\begin{proposition}\label{prop10} For every $i\in \{0,1\}$, it holds
\[
\#\left\{b\in \mathbb{F}_{q^4}| \mathbf{T}^{4k}_k(b)=1,
\mathbf{T}_{2k}(b^{q^2+1})=i\right\}=\frac{q^3}{2}.
\]
\end{proposition}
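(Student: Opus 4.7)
The plan is to exhibit an involution on $W := \{b \in \GF{q^4} : \mathbf{T}^{4k}_k(b) = 1\}$ that swaps the two subsets corresponding to $i = 0$ and $i = 1$. Since $\mathbf{T}^{4k}_k : \GF{q^4} \to \GF{q}$ is a surjective $\GF{q}$-linear map, $|W| = q^3$, and the existence of such an involution immediately gives the claim $q^3/2$ for each value of $i$.

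The involution I propose is translation $b \mapsto b + \beta$ for a carefully chosen $\beta \in \GF{q}$. Any $\beta \in \GF{q}$ already preserves $W$, because $\mathbf{T}^{4k}_k(\beta) = 4\beta = 0$ in characteristic $2$. To analyze the effect on $Q(b) := \mathbf{T}_{2k}(b^{q^2+1})$, I would use $\beta^{q^2} = \beta$ (since $\GF{q} \subseteq \GF{q^2}$) to expand
\[
(b+\beta)^{q^2+1} = b^{q^2+1} + \beta(b + b^{q^2}) + \beta^2
\]
and apply $\mathbf{T}_{2k}$. The constant term satisfies $\mathbf{T}_{2k}(\beta^2) = \mathbf{T}_{2k}(\beta)^2 = 0$, because for $\beta \in \GF{q}$ we have $\mathbf{T}_{2k}(\beta) = 2\mathbf{T}_k(\beta) = 0$. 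For the cross term, setting $z := b + b^{q^2} \in \GF{q^2}$, transitivity of trace (Item~\ref{Tp2} of Proposition~\ref{P1}) together with $\beta \in \GF{q}$ gives
\[
\mathbf{T}_{2k}(\beta z) = \mathbf{T}_k\bigl(\beta \cdot \mathbf{T}^{2k}_k(z)\bigr) = \mathbf{T}_k\bigl(\beta \cdot \mathbf{T}^{4k}_k(b)\bigr),
\]
where I used the identity $\mathbf{T}^{2k}_k(b + b^{q^2}) = b + b^q + b^{q^2} + b^{q^3} = \mathbf{T}^{4k}_k(b)$. On $W$ this reduces to $\mathbf{T}_k(\beta)$.

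To finish, I would pick any $\beta \in \GF{q}$ with $\mathbf{T}_k(\beta) = 1$, which exists since $\mathbf{T}_k : \GF{q} \to \GF{2}$ is surjective. The computation above then yields $Q(b+\beta) = Q(b) + 1$ for every $b \in W$, so translation by $\beta$ is a fixed-point-free involution on $W$ that exchanges the two fibers of $Q$, proving that each fiber has cardinality $q^3/2$.

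The only subtle point is selecting the right $\beta$: one needs an element simultaneously invisible to the $\GF{q}$-valued trace $\mathbf{T}^{4k}_k$ and visible to the $\GF{2}$-valued trace $\mathbf{T}_{2k}$ acting on the bilinear correction. Restricting $\beta$ to $\GF{q}$ handles the former (via $4 = 0$ in characteristic $2$), while $\mathbf{T}_k(\beta) = 1$ handles the latter; everything else is routine manipulation via Proposition~\ref{P1}.
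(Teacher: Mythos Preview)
Your proof is correct and is essentially the same argument as the paper's: both fix an element $u\in\GF{q}$ with $\mathbf{T}_k(u)=1$ and show that the translation $b\mapsto b+u$ is a bijection between the two fibers inside $\{b:\mathbf{T}^{4k}_k(b)=1\}$, via the same expansion of $(b+u)^{q^2+1}$ and the same trace transitivity. The only cosmetic difference is your handling of the $\beta^2$ term (you use $\mathbf{T}_{2k}(\beta)^2=0$, the paper uses $\mathbf{T}^{2k}_k(u^2)=0$), which is immaterial.
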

\begin{proof}
Define sets $$S:=\left\{b\in \mathbb{F}_{q^4}\mid
\mathbf{T}^{4k}_k(b)=1\right\}$$ and $$S_i:=\left\{b\in
\mathbb{F}_{q^4}\mid \mathbf{T}^{4k}_k(b)=1,
\mathbf{T}_{2k}(b^{q^2+1})=i\right\}.$$ Obviously $S_0\cup S_1=S$
and $\#S=q^3$ as $\mathbf{T}^{4k}_k$ is a $q^3$-to-1 mapping  from
$\GF{q^4}$ onto $\GF{q}$. Fixing any $u\in \mathbb{F}_q$ with
$\mathbf{T}_k(u)=1$, the mapping $b\in S_0\longmapsto b+u\in S_1$ is
a well-defined bijection between $S_0$ and $S_1$. Indeed, for any
$b\in S_0$, one has $\mathbf{T}^{4k}_k(b+u)=\mathbf{T}^{4k}_k(b)=1$
and
\begin{align*}
  &\mathbf{T}_{2k}\left((b+u)^{q^2+1}\right)=\mathbf{T}_{2k}\left(b^{q^2+1}+u(b+b^{q^2})+u^2\right)
 \\&=\mathbf{T}_{2k}\left(u(b+b^{q^2})+u^2\right)\overset{\text{Item \ref{Tp2} of Prop. \ref{P1}}}{=}
 \mathbf{T}_k\left(\mathbf{T}_{k}^{2k}\left(u(b+b^{q^2})+u^2\right)\right)\\
 &=\mathbf{T}_k\left(\mathbf{T}_{k}^{2k}\left(u(b+b^{q^2})\right)\right)=\mathbf{T}_k(u\mathbf{T}^{4k}_k(b))=\mathbf{T}_k(u)=1,
\end{align*}
thus $b+u\in S_1.$ Therefore
$\#S_0=\#S_1=\frac{\#S}{2}=\frac{q^3}{2}$.\qed
\end{proof}

\begin{proposition}\label{prop11} It holds
\[
\#\left\{c\in \mathbb{F}_{q^4}\mid \mathbf{T}^{4k}_k(c)\neq 1,
\mathbf{T}_k\left((1+c+c^{q^2})^{q+1}\right)=0\right\}=\frac{q^4-q^3}{2}.
\]
\end{proposition}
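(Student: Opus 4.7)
The plan is to reparametrize the count by introducing $y := 1 + c + c^{q^2}$. Since $y^{q^2} = 1 + c^{q^2} + c^{q^4} = y$, the element $y$ lies in $\mathbb{F}_{q^2}$, and the assignment $c \mapsto y$ is an $\mathbb{F}_2$-affine surjection $\mathbb{F}_{q^4} \twoheadrightarrow \mathbb{F}_{q^2}$ whose linear part $c \mapsto c + c^{q^2}$ has kernel exactly $\mathbb{F}_{q^2}$; hence every fiber has size $q^2$. A direct calculation gives $\mathbf{T}_k^{4k}(c) = (c+c^{q^2}) + (c+c^{q^2})^q = (y+1) + (y+1)^q = y + y^q$, so the two defining conditions $\mathbf{T}_k^{4k}(c) \neq 1$ and $\mathbf{T}_k((1+c+c^{q^2})^{q+1}) = 0$ translate respectively to $y + y^q \neq 1$ and $\mathbf{T}_k(y^{q+1}) = 0$. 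This reduces the problem to showing
\[
\#\bigl\{y \in \mathbb{F}_{q^2} : y+y^q \neq 1,\ \mathbf{T}_k(y^{q+1}) = 0\bigr\} = \frac{q^2 - q}{2},
\]
after which multiplying by the fiber size $q^2$ yields the claim.

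Next, I would stratify the $y$-count by $t := y + y^q \in \mathbb{F}_q$, which now runs over $\mathbb{F}_q \setminus \{1\}$. For each such $t$, the surjectivity of $\mathbf{T}_k^{2k} : \mathbb{F}_{q^2} \to \mathbb{F}_q$ produces a representative $y_0$, and the level set is the coset $y_0 + \mathbb{F}_q$ of size $q$. Expanding
\[
(y_0 + a)^{q+1} = y_0^{q+1} + a(y_0 + y_0^q) + a^2 = y_0^{q+1} + at + a^2
\]
and using $\mathbf{T}_k(a^2) = \mathbf{T}_k(a)$ (Frobenius invariance of the $\mathbb{F}_q \to \mathbb{F}_2$ trace) gives
\[
\mathbf{T}_k\bigl((y_0+a)^{q+1}\bigr) = \mathbf{T}_k(y_0^{q+1}) + \mathbf{T}_k\bigl((t+1)a\bigr).
\]
Because $t \neq 1$, the element $t+1 \in \mathbb{F}_q^\star$, so $a \mapsto \mathbf{T}_k((t+1)a)$ is a nonzero $\mathbb{F}_2$-linear functional on $\mathbb{F}_q$, hence balanced: exactly $q/2$ choices of $a$ make it equal $\mathbf{T}_k(y_0^{q+1})$, independently of which coset representative $y_0$ was chosen.

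Summing over the $q - 1$ admissible values of $t$ gives $(q-1)\cdot q/2 = (q^2 - q)/2$ admissible $y$, and multiplying by the fiber size $q^2$ yields $(q^4 - q^3)/2$, as required. There is no substantive obstacle here: the only step that demands any care is spotting the reparametrization $y = 1 + c + c^{q^2}$ and recognizing that the exclusion $t \neq 1$ is precisely what renders $t+1$ invertible, forcing the linear functional $a \mapsto \mathbf{T}_k((t+1)a)$ to be nondegenerate; the rest is routine linear algebra over $\mathbb{F}_2$.
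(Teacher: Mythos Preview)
Your proof is correct. You and the paper share the same opening move---the $q^2$-to-$1$ reparametrization $c\mapsto y:=1+c+c^{q^2}\in\GF{q^2}$---but then diverge. The paper first observes that the constraint $\mathbf{T}_k^{4k}(c)\neq 1$ is \emph{redundant}: when $\mathbf{T}_k^{4k}(c)=1$ one computes $(1+c+c^{q^2})^{q+1}=(c+c^{q^2})+(c+c^{q^2})^2$, whose $\mathbf{T}_k$ is $\mathbf{T}_k^{4k}(c)=1$, so the second condition already excludes these $c$. Having dropped the first constraint, the paper then counts via the multiplicative norm $y\mapsto y^{q+1}$, which is $(q{+}1)$-to-$1$ from $\GF{q^2}^*$ onto $\GF{q}^*$, together with the fact that $\GF q$ contains $q/2$ trace-zero elements. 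You instead retain both constraints and stratify additively by $t=y+y^q$, observing that precisely the exclusion $t\neq 1$ makes $a\mapsto\mathbf{T}_k((t{+}1)a)$ a nonzero linear form, hence balanced on each coset. Your route is slightly more elementary (pure additive linear algebra, no norm map), while the paper's route yields as a byproduct the structural fact that the two defining conditions are not independent.
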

\begin{proof}
If $\mathbf{T}^{4k}_k(c)=1$, then
\begin{align*}
&\mathbf{T}_k\left((1+c+c^{q^2})^{q+1}\right)=\mathbf{T}_k\left((c+c^{q^2})(1+c+c^{q^2})\right)\\
&=\mathbf{T}_k\left((c+c^{q^2})+(c+c^{q^2})^2\right)\overset{\text{Item
\ref{Tp1} of Prop.
\ref{P1}}}{=}(c+c^{q^2})+(c+c^{q^2})^q=\mathbf{T}^{4k}_k(c)=1.
\end{align*}
Therefore we have in fact $$\left\{c\in \mathbb{F}_{q^4}\mid
\mathbf{T}^{4k}_k(c)\neq 1,
\mathbf{T}_k\left((1+c+c^{q^2})^{q+1}\right)=0\right\}=\left\{c\in
\mathbb{F}_{q^4}\mid
\mathbf{T}_k\left((1+c+c^{q^2})^{q+1}\right)=0\right\}.$$ To
determine its cardinality, note the following three facts :
\begin{enumerate}
\item $c\in \GF{q^4}\longmapsto
1+c+c^{q^2}=1+\mathbf{T}_{2k}^{4k}(c)\in \GF{q^2}$ is a $q^2$-to-1
mapping from $\GF{q^4}$ onto $\GF{q^2}$;
\item $d\in
\GF{q^2}^*\longmapsto d^{q+1}\in \GF{q}^*$ is a $(q+1)$-to-1 mapping
from $\GF{q^2}^*$ onto $\GF{q}^*$;
\item In $\GF{q}$ there are
exactly $\frac{q}{2}$ elements (including 0) with the absolute trace
0.
\end{enumerate} Consequently, we have
$$\#\left\{c\in
\mathbb{F}_{q^4}\mid
\mathbf{T}_k\left((1+c+c^{q^2})^{q+1}\right)=0\right\}=q^2\cdot\left(1+\left(\frac{q}{2}-1\right)(q+1)\right)
=\frac{q^4-q^3}{2}.$$\qed
\end{proof}

Finally, we have the following result which was also proved in
\cite{BL10,XY17,Fu22}.
\begin{corollary}\label{cor}
Let $N_i$ be the number of $b$'s in $\mathbb{F}_{q^4}$ such that
$$X^{q^2+q+1}+(X+1)^{q^2+q+1}=b$$ has $i$ solutions in $\GF{q^4}$.
Then
$$N_0=\frac{5q^4-q^3}{8}, N_2=\frac{q^4+q^3}{4},
N_4=\frac{q^4-q^3}{8}.$$
\end{corollary}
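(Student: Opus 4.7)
My plan is to compute $N_2$ directly from Theorem~\ref{main theorem} together with Propositions~\ref{prop10} and~\ref{prop11}, and then recover $N_0$ and $N_4$ from the two elementary identities
\[
N_0+N_2+N_4=q^4,\qquad 2N_2+4N_4=q^4,
\]
the first because each $b\in\GF{q^4}$ is counted exactly once, and the second because this sum enumerates the pairs $(x,b)\in\GF{q^4}^2$ with $f(x)=b$ and hence equals $\#\GF{q^4}$.

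By Theorem~\ref{main theorem} the equation $f(X)=b$ has exactly two solutions if{f} one of two disjoint conditions holds. The first, $(A)$: $\mathbf{T}^{4k}_k(b)=1$ and $\mathbf{T}_{2k}(b^{q^2+1})=1$, contributes $q^3/2$ values of $b$ by Proposition~\ref{prop10}. The second, $(B)$, requires $\mathbf{T}^{4k}_k(c)\neq 1$ and $\mathbf{T}_k\bigl((1+c+c^{q^2})^{q+1}\bigr)=0$, where $c$ is any solution of $c^2+(\mathbf{T}^{4k}_k(b)+1)c=b^q$.

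For $(B)$ I must first verify that the condition depends only on $b$, and then translate the count of admissible $b$'s into a count of admissible $c$'s. The two candidate values of $c$ differ by $e+1$, where $e:=\mathbf{T}^{4k}_k(b)\in\GF{q}$; since $\mathbf{T}^{4k}_k(e+1)=0$, the trace $\mathbf{T}^{4k}_k(c)$ is unchanged by $c\mapsto c+(e+1)$, and since $e+1\in\GF{q}$ the element $1+c+c^{q^2}$ is also unchanged. Conversely, for any $c\in\GF{q^4}$ with $e':=\mathbf{T}^{4k}_k(c)\in\GF{q}\setminus\{1\}$, setting $b:=(c^2+(e'+1)c)^{1/q}$ yields $\mathbf{T}^{4k}_k(b)=e'^2+(e'+1)e'=e'$, so $c$ really is a solution of the quadratic attached to $b$. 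Consequently, $c\mapsto b$ is a well-defined $2$-to-$1$ surjection from $\{c\in\GF{q^4}\mid\mathbf{T}^{4k}_k(c)\neq 1\}$ onto the set of $b$'s satisfying $(B)$, and Proposition~\ref{prop11} gives
\[
\#\bigl\{b\in\GF{q^4}\text{ satisfying }(B)\bigr\}=\tfrac{1}{2}\cdot\tfrac{q^4-q^3}{2}=\tfrac{q^4-q^3}{4}.
\]
Summing contributions, $N_2=q^3/2+(q^4-q^3)/4=(q^4+q^3)/4$, and the linear system above then yields $N_4=(q^4-q^3)/8$ and $N_0=(5q^4-q^3)/8$.

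The delicate point I expect to be the main obstacle is the invariance and $2$-to-$1$ claim in the previous paragraph: one needs to check carefully that the condition in case $(B)$ is independent of the chosen $c$ and that each such $b$ has exactly two $c$-preimages. Once this is done, the corollary is purely a bookkeeping consequence of Propositions~\ref{prop10} and~\ref{prop11}, together with solving a $2\times 2$ linear system.
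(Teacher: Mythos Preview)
Your plan is correct and is exactly the route the paper takes: compute $N_2$ by combining the two cases of Theorem~\ref{main theorem}, using Proposition~\ref{prop10} for case~(A) and the $2$-to-$1$ correspondence $c\mapsto b$ together with Proposition~\ref{prop11} for case~(B), and then recover $N_4$ and $N_0$ from the linear relations $2N_2+4N_4=q^4$ and $N_0+N_2+N_4=q^4$. One small slip in wording: the $2$-to-$1$ surjection onto the set of $b$'s satisfying~(B) is from the set in Proposition~\ref{prop11}, namely $\{c\in\GF{q^4}\mid \mathbf{T}^{4k}_k(c)\neq 1,\ \mathbf{T}_k((1+c+c^{q^2})^{q+1})=0\}$, not from the full set $\{c\in\GF{q^4}\mid \mathbf{T}^{4k}_k(c)\neq 1\}$; your subsequent computation already uses the correct domain.
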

\begin{proof}
 For each $b\in \GF{q^4}$, there are two different $c$'s ($c_0$ and $c_0+e+1$) satisfying
$c^2+(e+1)c=b^q$. By Theorem~\ref{main theorem},
Proposition~\ref{prop10} and Proposition~\ref{prop11}, we have
$N_2=\frac{q^3}{2}+\frac{1}{2}\cdot\frac{q^4-q^3}{2}=\frac{q^4+q^3}{4}$,
and therefore $N_4=\frac{q^4-2\cdot N_2}{4}=\frac{q^4-q^3}{8},$
$N_0=q^4-N_2-N_4=\frac{5q^4-q^3}{8}.$ \qed

\end{proof}

%\section*{Acknowledgement}

\end{document}